\documentclass[a4paper,USenglish,cleveref,autoref,thm-restate]{lipics-v2021}

\hideLIPIcs

\usepackage{cite}
\usepackage{textcomp}

\usepackage[utf8]{inputenc}
\usepackage[T1]{fontenc}
\usepackage{amsmath,amsfonts,amsthm,amssymb,thmtools,thm-restate,booktabs,color,doi,graphicx,latexsym,url,xcolor,xspace}
\usepackage[numbers,sort&compress]{natbib}
\usepackage{microtype,hyperref}
\usepackage[inline]{enumitem}
\usepackage{soul}
\usepackage{bm}
\usepackage{thm-restate}
\usepackage{balance}
\usepackage[normalem]{ulem}
\setlist[itemize]{label=--}
\setlist[enumerate]{label=(\arabic*),labelindent=\parindent,leftmargin=*}

\usepackage{cleveref}
\crefname{figure}{Fig.}{Figs.}
\crefname{algocf}{Alg.}{Algs.}

\usepackage{tikz}
\usetikzlibrary{arrows}
\usetikzlibrary{arrows.meta}
\usetikzlibrary{positioning}
\usetikzlibrary { decorations.pathmorphing, decorations.pathreplacing, decorations.shapes, } 

\usepackage[ruled, lined, linesnumbered, commentsnumbered, noend, ]{algorithm2e}
\usepackage[noend]{algpseudocode}

\usepackage{todonotes}

\definecolor{citecolor}{HTML}{0000C0}
\definecolor{urlcolor}{HTML}{000080}

\definecolor{blueLink}{rgb}{0,0.2,0.8}
\hypersetup{colorlinks,linkcolor=blueLink,urlcolor=blueLink,citecolor=blueLink}

\newcommand{\OPT}{\textsc{OPT}}
\newcommand{\OPTstar}{\textsc{OPT}\mbox{$^*$}}
\newcommand{\ALG}{\textsc{ALG}}

\newcommand{\ONL}{\textsc{ONL}}
\newcommand{\OFF}{\textsc{OFF}}

\newcommand{\cand}{\mathcal{C} }

\newcommand{\VMs}{nodes\xspace}
\newcommand{\VM}{node\xspace}

\title{
  Online Graph Embedding in Star Graphs
}

\author{Julien Dallot}{TU Berlin, Germany}{julien.dallot@tu-berlin.de}{}{}
\author{Darya Melnyk}{TU Berlin, Germany}{melnyk@tu-berlin.de}{}{}
\author{Maciej Pacut}{TU Berlin, Germany}{maciej@inet.tu-berlin.de}{}{}
\author{Stefan Schmid}{TU Berlin and Weizenbaum Institute, Germany}{stefan.schmid@tu-berlin.de}{}{}

\authorrunning{J. Dallot, D. Melnyk, M. Pacut and S. Schmid}

\ccsdesc[500]{Theory of computation~Online algorithms}
\ccsdesc[500]{Theory of computation~Random projections and metric embeddings}
\ccsdesc[500]{Theory of computation~Scheduling algorithms}

\keywords{
  Graph Embedding, Online Algorithms, Competitive Analysis
}

\funding{
  Research funded by the German Research Foundation (DFG), Schwerpunktprogramm SPP 2378 (ReNO-2), grant 511099228, 2025-2029
}

\nolinenumbers

\begin{document}

\maketitle

\begin{abstract}
  Graph embedding is a fundamental problem of mapping nodes of a guest graph into a host graph while minimizing the distance distortion, with broad applications, including virtual network embeddings into physical topologies, VLSI design, or community detection in social networks.
  However, in many real-world applications the guest graph changes over time and the embedding can adapt to these changes (e.g. virtual machine migration in network embeddings).
  Static embeddings are inherently inefficient in comparison to adaptive embeddings, but it remains an unresolved algorithmic challenge to design efficient embedding algorithms that adapt to the demand on-the-fly, i.e., that are \emph{online}.
  
  In this paper, we derive optimal deterministic and randomized online algorithms for the online graph embedding problem in star host graphs.
  This is an essential building block on the way to design algorithms for more complex host graphs, representing a single node and its neighborhood.
  We start by presenting a $1.5$-competitive deterministic algorithm and showing that no deterministic algorithm can perform better.
  Our main contribution is a randomized algorithm that achieves a significantly better competitive ratio of $11/9 \approx 1.222$.
  Both the deterministic and the randomized algorithms are optimal, which we prove by deriving tight lower bounds for the competitiveness of any algorithm.
\end{abstract}

\section{Introduction}

Consider a graph embedding problem where the task is
to embed a guest graph $G$ into a host graph $H$ while minimizing the
total stretch of the edges~\cite{Diaz02}.
More formally, it is to find a bijective mapping $\phi: V(G) \to
V(H)$ that minimizes the quantity
\begin{align*}
  \sum_{uv \in E(G)} d_H[\phi(u), \phi(v)],
\end{align*}  
where $d_H$ is the distance in $H$ and $E(G)$ is the set of edges of $G$.

In this paper, we study the \emph{online} variant of the graph
embedding problem where the edges of the guest graph are not requested
all at once but rather one by one (an edge can be requested multiple times).
At the beginning, we are given an \emph{initial} mapping $\phi_0$ of the nodes of the guest graph into the nodes of the host graph.
Once an edge is requested, we must \emph{serve} it by paying a cost
equal to the distance between its endpoints in the current mapping.
At any time, we have the possibility to \emph{adjust} the current
mapping $\phi$, incurring a cost equal to the number of adjacent swaps
between guest nodes.
Given $\sigma = \{u_1, v_1\}, \{u_2, v_2\}, \ldots$ the sequence of
requested edges, our objective is to find a sequence of mappings
$\phi = \phi_1, \phi_2, \ldots$ to minimize the total cost for serving and adjusting:
\begin{align*}
  \min_{\phi} \sum_{i} d_{\tau}[\phi_{i-1}, \phi_i] + d_H[\phi_i(u_i), \phi_i(v_i)],
\end{align*}
where $d_{\tau}$ is the cost of adjusting the mapping and $d_H$ is the distance in the host graph.
\cref{fig:star_topology} presents a sample star topology and the costs corresponding to different operations. 

\begin{figure}[h!]
  \centering
  \begin{tikzpicture}[scale=1.2]
    \label{star_poa}
    \def \n {6}
    \def \N {8}
    \def \radius {2cm}
    \def \rd {1mm}
    \def \rer {4mm}
    \def \squarewidth {0.8cm}
    \definecolor{red1}{HTML}{EE6677}
    \definecolor{green1}{HTML}{50C878}

    \def \margin {8} 

    \node[draw, minimum size=\squarewidth, fill=black!10] (w_square) at (360:0mm) {};
    \node[draw, minimum size=\squarewidth, fill=black!10] (x_square) at ({360/\n *\n / 4}:\radius) {};
    \node[draw, minimum size=\squarewidth, fill=black!10] (y_square) at ({360/4 - 360/\n * (1.7 - 1)}:\radius) {};
    \node[draw, minimum size=\squarewidth, fill=black!10] (z_square) at ({360/4 - 360/\n * (2.4 - 1)}:\radius) {};
    \node[draw, minimum size=\squarewidth, fill=black!10] (a_square) at ({360/4 - 360/\n * (3.5 - 1)}:\radius) {};
    \node[draw, minimum size=\squarewidth, fill=black!10] (b_square) at ({360/4 - 360/\n * (4.6 - 1)}:\radius) {};
    \node[draw, minimum size=\squarewidth, fill=black!10] (c_square) at ({360/4 - 360/\n * (5.9 - 1)}:\radius) {};

    \node[draw, circle, fill=white] (w) at (360:0mm) {$w$};
    \node[draw, circle, fill=white] (x) at ({360/\n *\n / 4}:\radius) {$x$};
    \node[draw, circle, fill=white] (y) at ({360/4 - 360/\n * (1.7 - 1)}:\radius) {$y$};
    \node[draw, circle, fill=white] (z) at ({360/4 - 360/\n * (2.4 - 1)}:\radius) {$z$};
    \node[draw, circle, fill=white] (a) at ({360/4 - 360/\n * (3.5 - 1)}:\radius) {$a$};
    \node[draw, circle, fill=white] (b) at ({360/4 - 360/\n * (4.6 - 1)}:\radius) {$b$};
    \node[draw, circle, fill=white] (c) at ({360/4 - 360/\n * (5.9 - 1)}:\radius) {$c$};

    \path (w_square) edge (x_square) edge (y_square) edge (z_square) edge
    (a_square) edge (b_square) edge (c_square);

    \draw[draw=red1, bend right=27,
    decorate,decoration={coil,segment length=5pt}, thick] (a) to
    node[below=0.2, sloped] {cost 2} (z)
    ;

    \draw[draw=red1, bend right=27,
    decorate,decoration={coil,segment length=5pt}, thick] (c) to
    node[below=0.2, sloped] {cost 2} (b)
    ;

    \draw[draw=red1, bend left=15,
    decorate,decoration={coil,segment length=5pt}, thick] (w) to
    node[below=0.1, sloped] {cost 1} (b)
    ;

    \draw[draw=red1, bend left=48,
    decorate,decoration={coil,segment length=5pt}, thick] (c) to
    node[above=0.1, sloped] {cost 2} (x)
    ;

    \draw[stealth-stealth, draw=green1, bend left=30,
    decorate,decoration={segment length=8pt}, very thick] (c) to
    node[above=0.1, sloped] {{swap}} (w)
    ;

  \end{tikzpicture}

  \caption{
    A depiction of the graph embedding problem in a star host graph.
    The host graph has square, grayed nodes.
    The guest graph has circle nodes $w, x, y, \dots c$ and has red, curly edges.
    Next to each guest edge is written the cost to serve that edge in
    the depicted embedding.
    The green, straight edge shows a possible adjustment of the embedding.
    Guest edges are revealed one by one, for example $(a, z)$, $(b, w)$,
    $(c, b)$ and finally $(c, x)$.
    For that sequence of four edges, an optimal strategy first serves
    $(a, z)$ at cost $2$, $(b, w)$ at cost $1$, then swaps $c$ and $w$
    at cost $1$, and finally serves $(c, b)$  and $(c, x)$ at cost $1$ each.
  }
  \label{fig:star_topology}
\end{figure}

An inherent challenge in many graph embedding applications  is that
the edges of the guest graph are not known in advance but rather
requested one by one (notably in virtual network embedding, see
\cref{sec:motivation}).
A well-established method to deal with uncertainty of the future is
the framework of online algorithms and competitive
analysis~\cite{Borodin1998}.
Ideally, these online algorithms achieve a good competitive ratio:
intuitively, without knowing the future demand, their performance is
almost as good as an optimal offline algorithm that knows the future.
Formally, for a given request sequence $\sigma$, let $\ONL(\sigma)$ and
$\OPT(\sigma)$ be the cost incurred by an online algorithm $\ONL$ and
an optimal offline algorithm $\OPT$, respectively.
$\ONL$ is said to be $c$-competitive if for any input
sequence $\sigma$ it holds that $\ONL(\sigma) \le c\cdot
\OPT(\sigma)$.

Online graph embedding has been attracting attention in recent years.
The community is currently in the process of designing efficient algorithms for general host topologies, starting with simple topologies due to the challenging nature of the problem: capacitated cliques \cite{Henzinger0019, PacutP020, HenzingerNRS21} or lines~\cite{ILU, BienkowskiE24, LearningMLA}.
We complement this line of research by studying the online graph embedding in the star graph, which addresses fundamentally different aspects of the problem and a crucial building block for more complex host topologies.

Our problem is also related to classic online problems in data structures, such as the \emph{list access problem} and self-adjusting binary search trees.
In these problems, requests typically pertain to a \emph{single node}, and the goal is to minimize access and reorganization costs~\cite{SleatorT85,Reingold1994}.
In contrast, the considered problem---and similar problems like dynamic graph partitioning~\cite{AvinBLPS20,Henzinger0019} and dynamic minimum linear arrangement~\cite{ILU}---features requests between \emph{pairs of nodes}.
This adds a significant algorithmic challenge, as an optimal solution must consider not only the positioning of individual nodes but also how their pairwise interactions evolve over time.

\subsection{Our contributions}

We develop optimal deterministic and randomized online algorithms for the online graph embedding problem in star host graphs.
Our algorithms operate without any assumptions about the input distribution.
Despite that, we derive formal guarantees on their performance, comparing with a hypothetical offline algorithm that knows the future.
We complement our upper bounds with tight lower bounds, showing that our algorithms are the best possible.

\medskip
\noindent\textbf{Deterministic algorithms.}
We show a lower bound of $1.5$ for the competitive ratio of any deterministic algorithm for the online graph embedding problem.
This bound is derived on a star topology with only three physical nodes.
We match this lower bound by deriving an optimal $1.5$-competitive deterministic algorithm.
This algorithm keeps track of the guest graph nodes that the optimal offline algorithm might have at its center (the \emph{candidates set}).
When the algorithm narrows down the set of such \VMs\ to at most two, it migrates any of them to the center (cf. \cref{sec:det}).

\medskip
\noindent\textbf{Randomized algorithms.} 
The main contribution of this paper is a $11/9$-competitive randomized algorithm.
This algorithm keeps the general outline given by the deterministic algorithm, and improves it in two aspects: (1) when the randomized algorithm narrows down the set of candidates to two nodes, it breaks ties randomly, and (2) while the set is not yet narrowed down to two, it still moves the requested node to the center with some probability. 
We also derive a lower bound of $11/9$ showing that the randomized upper bound is tight.
To this end, we present an input distribution based on two request patterns for which any deterministic algorithm is at least $11/9$-competitive and use Yao's principle to derive the lower bound (cf. \cref{sec:rand-d=1}). 

\subsection{Motivation: Online virtual network embedding}
\label{sec:motivation}

With the popularity of data-centric workloads, e.g., related to batch processing or distributed machine learning, the efficiency of the interconnecting network has become critical for the performance of many applications.
The requirements on the network are further increased by trends toward resource disaggregation (where fast access to remote resources e.g., GPUs or memory, is critical) and toward hardware-driven workloads (such as distributed training)~\cite{li2019hpcc,khani2021sip,mogul2012we}.
Accordingly, over the last years, great efforts have been made to improve the performance of communication networks, especially in datacenters~\cite{pieee19}.

A particularly attractive approach to rendering communication networks more efficient is to render them traffic-aware and ``self-adjusting''.
This can be done by leveraging resource allocation flexibilities enabled by virtualization:  frequently communicating nodes (e.g.,~containers or virtual machines) can be migrated topologically closer, thus saving resources and reducing communication overheads (including ``bandwidth taxes''~\cite{MelletteMRFPSP17,GrinerZBG0A22}).
Empirical studies indeed show that communication traffic features much spatial and temporal structure~\cite{kandula2009nature,BensonAM10}, which may be  exploited for optimization.

The algorithmic challenge of such optimizations is to strike a balance between the benefits and the costs of node migrations.
Ideally, a node migration should be amortized by a more efficient resource allocation in the future, and frequent node migrations should be avoided, e.g., due to instabilities.

This algorithmic problem features interesting connections to classic
optimization problems.
In particular, the problem can be seen as a dynamic version of the virtual network embedding problem~\cite{Rost019}, which asks for an allocation of a virtual network on a physical network that keeps frequently communicating nodes topologically close together.
In our dynamic version of this problem, this allocation can be adjusted.

With the self-adjusting perspective on the embedding problem, we are in the realm of competitive analysis and are interested in online algorithms which are competitive against an optimal offline algorithm.
Competitiveness is the price for decision-making without knowledge of the future, and strong competitiveness guarantees may yield sufficient bounds for the performance of the networked system~\cite{Borodin1998}, e.g., without the need to design complex traffic prediction algorithms.

\subsection{Related work}
\label{sec:relwork}

The dynamic graph embedding problem is related to several well-known online problems in the field of data structures, where requests typically pertain to a single node. For instance, in the classic \emph{list access problem}~\cite{SleatorT85} and in self-adjusting binary search trees~\cite{SleatorT85tree}, each request involves accessing a single node. In these problems, the goal is to minimize access and reorganization costs by optimizing the placement of individual nodes. Similar concepts are also applied in the context of virtual machine migration in computer networks, where the focus is typically on individual nodes~\cite{Reingold1994,KamaliSurvey2013}. 

In contrast, our work, along with problems like dynamic graph partitioning~\cite{AvinBLPS20,Henzinger0019,PacutP020,PacutP021,HenzingerNRS21} and dynamic minimum linear arrangement~\cite{ILU}, introduces a significantly different challenge: the requests in these problems involve \emph{pairs of nodes}, not just single nodes. This fundamental difference adds complexity to the algorithmic design, as the optimization must account for the interactions between pairs of nodes over time. The embedding must be adjusted dynamically to minimize the communication cost between frequently interacting nodes, which requires a more sophisticated approach than the single-node cases. This makes the problem of dynamic graph embedding in star topologies particularly challenging, where a central node may have a reduced communication cost with all other nodes, and decisions must be made not only for individual nodes but for the pairwise communication patterns between them. 

For example, in the graph partitioning problem, a network of clusters of virtual machines is considered, where requests between nodes in the same cluster cost $0$ and those across clusters cost $1$~\cite{AvinBLPS20,Henzinger0019}. Similarly, in the dynamic minimum linear arrangement problem~\cite{ILU}, the challenge arises from requests between pairs of nodes, with the objective of minimizing the cost of communication based on the distance between the nodes.

Thus, while the online problems in data structures focus on optimizing access to individual nodes, our problem, like dynamic graph partitioning and dynamic minimum linear arrangement, involves pairwise node interactions, requiring novel algorithmic approaches to handle the added complexity of managing multiple interacting nodes over time.

\section{Model and definitions}

\label{sec:model}
The dynamic graph embedding problem in star topologies is formulated as follows.
We are given a host graph $N$ consisting of $n$ physical hosts connected in a \emph{star topology}.
In this topology, a distinguished \emph{central} host is at distance $1$ from every other host.
Every other pair of hosts is at distance $2$ from each other. 
Hence, we use the same cost model as in dynamic minimum linear arrangement from Olver et al.~\cite{ILU}.

The $n$ nodes of the guest graph $x\in X$ are embedded in the nodes of the host graph $N$.
A mapping from $X$ to the nodes of the host graph $N$ is called a \emph{configuration}.
In a configuration, at most one \VM\ can be mapped to a physical host. As the number of hosts is equal to the number of \VMs, each host has exactly one \VM mapped to it.

The requests are issued in a sequence $\sigma = (\sigma_1, \sigma_2,$ $\sigma_3, \ldots)$.
Each request $\sigma_i\in \sigma$ consists of a pair of nodes of the guest graph, i.e., $\sigma_i = \{x_k,x_l\}$ where $x_k, x_l\in X$.
After a request is issued, we must \emph{serve} it by paying a cost equal to the distance between the respective \VMs\ in the current configuration.

At any time (also before serving a request), an algorithm may alter the assignment of the guest nodes to the host nodes by choosing a new center, at the cost of $1$.
Note that a reasonable algorithm would always choose to migrate a~\VM\ to the center after seeing and before serving a request.
Our goal is to find an algorithm that minimizes the total cost over the entire sequence $\sigma$.
In addition, this algorithm should make its decisions in the \emph{online} framework, i.e., without knowledge of future requests.

\subsection{Online algorithms and competitive analysis}
\label{apx:competitive}

In the online version of the dynamic graph embedding problem, only one request from $\sigma$ is presented to the online algorithm \ALG\ at a time.
As in the offline case, the goal of the online algorithm is to minimize the total serving and migration costs for any given request sequence. 

We use competitive analysis to determine the quality of the online algorithm.
We thereby compare the online algorithm to the best possible offline algorithm \OPT\ that has full knowledge of the request sequence.
In order to have a fair comparison, we assume that both algorithms start in the same configuration.
We define $\ALG(\sigma)$ and $\OPT(\sigma)$ to be the costs of the online and the offline algorithm on the input sequence $\sigma$ respectively.
In the deterministic case, an algorithm \ALG\ is called
$c$-competitive if, for every possible input sequence $\sigma$, it
satisfies the inequality $\ALG(\sigma) \le c\cdot \OPT(\sigma)$.

When considering the randomized case, we assume that the adversary that presents the input sequence, is oblivious.
That is, the adversary has to determine the entire input sequence in
advance, without the knowledge of the actual random choices of \ALG\
(but only its probability distribution)~\cite{Borodin1998}.
In the randomized case, an algorithm \ALG\ is called $c$-competitive,
for every possible input sequence $\sigma$, the inequality
$\mathbb{E}(\ALG(\sigma)) \le c\cdot \OPT(\sigma)$ is satisfied.

\section{Deterministic algorithms}\label{sec:det}

We present a deterministic, online algorithm for the dynamic graph embedding problem in the star topologies.
We then prove that this algorithm is $1.5$-competitive.
Finally, we show that this competitive ratio is optimal by deriving a lower bound of $1.5$ for deterministic algorithms.

\subsection{A deterministic online algorithm}

In this section, we present the deterministic online algorithm PivotTracking that solves the dynamic graph embedding problem in star topologies in the deterministic framework. We refer to this algorithm as \ALG\ in the rest of the section.
We further introduce \OPTstar, an offline algorithm with optimal cost.
Among all optimal offline algorithms, \OPTstar\ is carefully chosen so that \ALG\ can track \OPTstar's central \VM; this tracking property is described in \cref{lem:tracking}. 
Finally, we prove that \ALG\ is $1.5$-competitive using this tracking property.
Our lower bound above shows that this upper bound is optimal.

The main design idea is to maintain a set of \VMs\ that an optimal offline algorithm \OPTstar\ potentially placed at the center.
In the rest of this article, this set is called the \emph{candidate set}, it is denoted by $\cand$.
\ALG\ updates $\cand$ every time a new request is issued; it can update $\cand$ in two ways, referred to as the union and the intersection behavior.
When a new request is issued, \ALG\ may either:
\begin{enumerate}
\item Confirm that \OPTstar\ currently has a \VM\ from the candidate set at the center.
  Then the algorithm would apply the \emph{intersection behavior} to reduce the candidate set to at most $2$ \VMs\ and place one of them at the center.
\item Discover a new candidate \VM\ (at most $2$ at once).
  In this case, we apply the \emph{union behavior} and add the new candidate(s) to $\cand$.
\end{enumerate}
\cref{alg:det_d_unit_leadertracking} presents this idea in pseudocode.

\SetKwFor{When}{when}{do}{endwhen}
\begin{algorithm}[h!]

  \caption{Deterministic PivotTracking}\label{alg:det_d_unit_leadertracking}

  \SetKwInOut{Input}{Input}
  \Input{let $x_{\textrm{init}}$ be the initial \VM\ at the center, $\cand \gets \{x_{\textrm{init}}\}$}

  \When{request $\sigma_t =\left\{x_{1}, x_{2}\right\}$ is issued}
  {
    \tcp{intersection behavior}
    \uIf{$|\cand \cap \sigma_{t}| > 0$}
    { 
      $\cand \gets \cand \cap \sigma_{t}$\\
      \uIf{no \VM\ from $\cand$ is at the center}{
        migrate a \VM\ from $\cand$ to the center, breaking ties according to a fixed order
      }
    }
    \tcp{union behavior}
    \uElse{
      $\cand \gets \cand \cup \sigma_{t}$\\
    }
    serve request $\sigma_t$
  }
\end{algorithm}
\newcommand{\detCompRatio}{1 + \frac{r}{r+2}}

In order to show that \cref{alg:det_d_unit_leadertracking} is
$1.5$-competitive, we will compare its performance to a particular
optimal offline algorithm \OPTstar\ that we define below.
To define \OPTstar, we first introduce the concept of phases defined
for a given feasible algorithm.

\begin{definition}[A phase and a pivot]\label{def:phase_and_pivot}
  Let $\ALG$ be any algorithm solving our problem.
  A \emph{phase} of $\ALG$ is a subset of consecutive requests where
  $\ALG$ keeps the same central node.
  At any time $t$, the \emph{pivot} of $\ALG$ is the central node of
  $\ALG$ at time $t$.
\end{definition}

\begin{definition}[$\OPTstar$]\label{def:OPTstar}
  We define $\OPTstar$ as a specific optimal offline algorithm that minimizes the lexicographic order on the reversed sequence of its phase lengths.
\end{definition}

\OPTstar\ is obtained by restricting the set of all optimal offline
algorithms to the ones whose last phase has minimum length; we further
restrict these algorithms to the ones whose penultimate phase has
minimum length, and so on until the first phase.
Possible ties are broken by taking an arbitrary algorithm among the
remaining ones.

\begin{lemma}\label{lem:phaseStartsWithX}
  Let $P = \langle \sigma_{i}, \sigma_{i+1}, \dots \sigma_{j} \rangle$ be a phase of \OPTstar.
  Assume that $P$ is not the first phase of $\sigma$, and let $x$ be the pivot of \OPTstar\ during $P$.
  Then $x \in \sigma_{i}$. 
\end{lemma}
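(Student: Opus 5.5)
We argue by contradiction using an exchange argument that contradicts the lexicographic minimality in \cref{def:OPTstar}. Suppose $P = \langle \sigma_i,\dots,\sigma_j\rangle$ is a non-first phase of \OPTstar\ with pivot $x$, and $x \notin \sigma_i$. Let $y$ be the pivot of the phase immediately preceding $P$, ending with request $\sigma_{i-1}$, so that \OPTstar\ migrates from $y$ to $x$ just before serving $\sigma_i$ at cost $1$. Since $x\notin\sigma_i$, serving $\sigma_i$ with pivot $x$ costs exactly $2$, because both endpoints are leaves.

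We then build a modified algorithm $\OPT'$ that behaves exactly like \OPTstar\ except that it keeps $y$ at the center while serving $\sigma_i$ and performs the migration $y\to x$ immediately after $\sigma_i$, i.e.\ before $\sigma_{i+1}$. If $j=i$, so that $P$ consists only of $\sigma_i$, we instead replace the two migrations $y\to x\to z$ around $\sigma_i$ by the single migration $y\to z$, or by none if $z=y$; this can only decrease cost. The migration cost of $\OPT'$ is unchanged, or smaller in the $j=i$ case. The serving cost of $\sigma_i$ under pivot $y$ is at most $2$, because any request costs at most $2$ in a star. Serving costs of all other requests are identical, so $\OPT'(\sigma)\le \OPTstar(\sigma)$, and $\OPT'$ is also optimal.

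Finally, we compare phase-length sequences. All phases after $P$ are unchanged. If $j>i$, phase $P$ in $\OPT'$ has length $j-i$ instead of $j-i+1$, which is strictly shorter, and the previous phase grows by one. Since the comparison is lexicographic on the \emph{reversed} sequence, the first differing entry is $P$'s length, and it is strictly smaller for $\OPT'$. In the case $j=i$, the phase $P$ disappears entirely: the last unchanged phase is followed, in reversed order, by a phase of $\OPT'$ that ends at $\sigma_i$ instead of $\sigma_{i-1}$ (or merges further if $z=y$). Here the reversed sequences first differ at the entry for $P$, which has length $1$ in \OPTstar\ and which in $\OPT'$ is replaced by the length of the enlarged preceding phase.

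The main obstacle is this bookkeeping of the lexicographic comparison when a phase vanishes or merges. In that case the new entry could be longer, which would break the ordering. The plan is to handle it in one of two ways. The first option is to note that if $\OPT'$ has strictly smaller cost, \OPTstar\ was not optimal, which is a contradiction in any case. When $y\ne z$ the two migrations collapse into one, so $\OPT'$ saves $1$ and is strictly cheaper, and the vanishing case only arises when cost is preserved. The second option, if needed, is to use a zero-cost variant that keeps $P$ as a phase of length $1$ located on a different request. I expect the strict-cost-saving argument to settle the $j=i$ case cleanly, since two migrations always cost $2>1$. If $z=y$, dropping both migrations saves $2$. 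Therefore only the $j>i$ case needs the lexicographic argument, and there the strict decrease of $P$'s length with all later phases fixed gives the contradiction.
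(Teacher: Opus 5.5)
Your proof is correct and follows essentially the paper's exchange argument. Postponing the migration to $x$ does not increase the cost and strictly shortens $P$ while all later phases stay fixed, which contradicts the reversed-lexicographic minimality of \OPTstar; the degenerate case is ruled out by a strict cost saving. The paper splits the cases slightly differently: it first argues that $x$ must appear somewhere in $P$ (otherwise nothing compensates the migration), and then delays the phase start to the first occurrence of $x$ rather than by a single request. One sub-case you did not spell out is $j=i$ with $\sigma_i$ the last request, where there is no outgoing migration $x\to z$. It falls to the same argument, since dropping the lone migration $y\to x$ saves $1$.
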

\begin{proof}[Proof idea]
  Observe that, for any phase length, a pivot \VM\ $x$ needs to be contained in one of the requests of the phase otherwise \OPTstar\ would not be optimal since there would be no compensation for the initial migration cost.
  It is possible to show that $x$ is contained in the first request by showing the inverse: if $x$ is not contained in the first request of a phase, then it is possible start the phase later while incurring the same cost. This contradicts the assumption that the algorithm has the lowest lexicographic order.\phantom\qedhere
\end{proof}

\begin{restatable}{lemma}{trackinglemma}\label{lem:tracking}
  Let $\sigma_i$ be a request in $\sigma$ and let \ALG\ denote \cref{alg:det_d_unit_leadertracking}. Then
  \begin{itemize}
  \item \ALG\ applies its intersection behavior on $\sigma_i$ $\iff$ \OPTstar\ pays $1$ for $\sigma_{i}$.
  \item \ALG\ applies its union behavior on $\sigma_i$ $\iff$ \OPTstar\ pays $2$ for $\sigma_{i}$, either by directly paying $2$ or by first migrating a \VM\ and then paying $1$.
  \end{itemize}
\end{restatable}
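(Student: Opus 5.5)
We will prove the key invariant that, just before each request $\sigma_i$ is served, the candidate set $\cand$ maintained by \ALG\ contains the current pivot of \OPTstar\ (the central node of \OPTstar\ before it reacts to $\sigma_i$). Both equivalences then follow. Since \OPTstar\ pays either $1$ or $2$ for each request, and \ALG\ applies exactly one of its two behaviors, it suffices to prove the two forward implications: intersection $\Rightarrow$ \OPTstar\ pays $1$, and union $\Rightarrow$ \OPTstar\ pays $2$. We proceed by induction on $i$. The base case holds because $\cand=\{x_{\textrm{init}}\}$ and both algorithms start from the same configuration.

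\emph{Characterizing \OPTstar's cost.} Let $p$ be \OPTstar's pivot before $\sigma_i$. \OPTstar\ pays $1$ for $\sigma_i$ exactly when, after its possible migration, its center lies in $\sigma_i$ and it does not migrate. To see this, note that by \cref{lem:phaseStartsWithX} a migration happens only at the first request of a new phase, and the new pivot then belongs to that request. Such a request costs $1+1=2$. Otherwise no migration happens, and the cost is $1$ iff $p\in\sigma_i$. Hence \OPTstar\ pays $1$ iff $p\in\sigma_i$ and \OPTstar\ does not start a new phase at $\sigma_i$.

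\emph{Union step.} Suppose $\cand\cap\sigma_i=\emptyset$. By the invariant $p\in\cand$, so $p\notin\sigma_i$ and \OPTstar\ pays $2$. The new pivot of \OPTstar\ is either $p$ or a node of $\sigma_i$, by \cref{lem:phaseStartsWithX}, so it lies in $\cand\cup\sigma_i$, which preserves the invariant.

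\emph{Intersection step.} Suppose $\cand\cap\sigma_i\neq\emptyset$. This is the main obstacle: we must show that $p\in\sigma_i$ and that \OPTstar\ does not switch phase here. The invariant only says $p\in\cand$, and the witness in $\cand\cap\sigma_i$ might be a different candidate.

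To handle this, we strengthen the invariant. Every node $y\in\cand$ is ``equally good'': since the last intersection step, every request was disjoint from the candidate set at that time. So an alternative optimal schedule that had $y$ at the center could be obtained by swapping the roles of $y$ and $p$ over the current stretch without changing the cost. Each union request costs $2$ for both schedules, and the migration into the stretch costs $1$ for both.

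We then use the lexicographic minimality of \OPTstar\ (\cref{def:OPTstar}) to argue two things. First, \OPTstar's pivot must in fact be a node of $\cand\cap\sigma_i$. Otherwise we could reroute \OPTstar\ to hold $y\in\cand\cap\sigma_i$ from the start of the stretch, saving $1$ on $\sigma_i$ (contradicting optimality) or obtaining a lexicographically smaller phase sequence. Second, \OPTstar\ does not begin a new phase at $\sigma_i$, since paying $1$ at $\sigma_i$ with $y$ already central dominates migrating. This exchange argument should mirror the proof of \cref{lem:phaseStartsWithX}.

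Consequently \OPTstar\ pays $1$, and its pivot lies in $\cand\cap\sigma_i$, which is the new $\cand$, so the invariant is restored. The delicate part is writing the exchange so that the modified schedule is feasible and its reversed phase-length sequence is comparable. We would first try to shift the start of \OPTstar's current phase to the time $y$ entered $\cand$.
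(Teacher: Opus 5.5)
Your reduction to the two forward implications, the invariant that \OPTstar's pivot lies in $\cand$, and the union step are all correct. Together they are the paper's ``cheap $\Rightarrow$ intersection'' direction in contrapositive form. The gap is the intersection step, which is the real content of the lemma and which you leave as a sketch that does not close as written. Rerouting to hold some $y\in\cand\cap\sigma_i$ does save $1$ on $\sigma_i$, but that contradicts nothing by itself. After $\sigma_i$ the rerouted schedule has $y$ at the center while \OPTstar\ has its pivot $x$ there. Returning to \OPTstar's configuration costs a migration, which can exactly cancel the saving. The tie must therefore be broken by the lexicographic rule, and for that you must say where the modified schedule rejoins \OPTstar. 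If it waits too long it can even become more expensive, since \OPTstar\ pays $1$ on every request of its phase that contains $x$. Also, ``shifting the start of \OPTstar's current phase to the time $y$ entered $\cand$'' is not quite the right modification when $y\neq p$. You need a new phase with pivot $y$ starting at that time. Since $y$ survives union steps, it may have entered $\cand$ before $p$'s phase began, so this $y$-phase can overwrite several phases of \OPTstar.

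The missing argument, as in the paper, runs as follows. Assume for contradiction that \OPTstar\ pays $2$ on the intersection request $\sigma_i$. Pick $y\in\cand\cap\sigma_i$, and let $P$ be \OPTstar's phase containing $\sigma_i$, with pivot $x$. Let $\sigma_{k(y)}$ be the request at which $y$ last entered $\cand$. It is a union step, so by induction \OPTstar\ pays $2$ there, and migrating $y$ in at that moment costs nothing extra (if $y=x_{\mathrm{init}}$ was never evicted, no migration is needed). On $\sigma_{k(y)+1},\dots,\sigma_{i-1}$ every intersection request contains $y$, since otherwise $y$ would have been evicted. Every union request there costs \OPTstar\ $2$. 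Hence holding $y$ is never more expensive. This step needs the intersection requests as well, not only your observation that union requests since the last intersection avoid $\cand$. On $\sigma_i$ holding $y$ costs $1$ against \OPTstar's $2$. Then keep $y$ until the next request of $P$ that contains $x$, migrate to $x$ there, and copy \OPTstar\ afterwards. If no such request exists, the new schedule is strictly cheaper. Otherwise it is no more expensive, every phase after $P$ is unchanged, and $P$ becomes strictly shorter, which contradicts \cref{def:OPTstar}. This single construction also handles your ``Second'' claim, where \OPTstar\ migrates at $\sigma_i$ although $p\in\sigma_i$. ``Paying $1$ dominates migrating'' does not settle that case without the same rejoin-and-compare step.
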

Due to its length, the proof of \cref{lem:tracking} is deferred to Appendix \ref{apx:tracking}.

\begin{definition}[Cheap and expensive requests]\label{def:expensive_cheap}
  Let $\sigma_{i}$ be a request of $\sigma$.
  We say that $\sigma_{i}$ is \emph{cheap} if \OPTstar\ pays $1$ for it.
  We say that $\sigma_{i}$ is \emph{expensive} if \OPTstar\ pays $2$ for it, either by directly paying $2$ or by first migrating a \VM\ and then paying $1$.
\end{definition}

In the previous lemma, we showed the connection between the algorithm structure and \OPTstar.
In order to prove the competitive ratio of the algorithm, we will arrange the input sequence with cheap and expensive requests in blocks.
We then derive the tight competitive ratio on each block in \cref{th:det-ub}.

\begin{definition}\label{def:block_decomposition}
  Let $\sigma$ be a finite request sequence.
  We use $\Gamma$ to denote a cheap request, and $\mathcal{E}$ to denote an expensive request in $\sigma$.
  We define a \emph{block} as a subsequence of successive requests of $\sigma$ containing at least one expensive request and then at least one cheap request, in this order.
  A block $B$ hence matches the following regular expression:
  \begin{align*}
    B = \mathcal{E}^{+} \Gamma^{+}
  \end{align*}
  We define a \emph{block decomposition} of $\sigma$ by decomposing $\sigma$ into the following regular expression:
  \begin{align*}
    \sigma = \Gamma^{*} \ \left(\mathcal{E}^{+} \ \Gamma^{+}\right)^{*} \mathcal{E}^{*}
  \end{align*}
\end{definition}

It can be easily verified that such a block decomposition is always possible and unique.

\begin{theorem}\label{th:det-ub}
  \cref{alg:det_d_unit_leadertracking} is $1.5$-competitive.
\end{theorem}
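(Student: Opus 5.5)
The plan is to charge \ALG's cost against \OPTstar's cost block by block, using the block decomposition of \cref{def:block_decomposition} and the characterization of \cref{lem:tracking}. By \cref{lem:tracking}, \OPTstar\ pays exactly $1$ on each cheap request $\Gamma$ and exactly $2$ on each expensive request $\mathcal{E}$. Here a migration of \OPTstar\ is counted inside the expensive request it precedes, so \OPTstar's total cost is exactly $|\Gamma| + 2|\mathcal{E}|$.

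\textbf{Cost of \ALG\ on each type of request.} By \cref{lem:tracking}, on an expensive request \ALG\ applies its union behavior. It does not migrate and pays at most $2$ for serving. On a cheap request \ALG\ applies its intersection behavior. After possibly migrating, it has a node of $\cand \cap \sigma_t$ at the center, so it pays $1$ for serving plus possibly $1$ for the migration. The key observation is that a migration can only happen on the \emph{first} cheap request following an expensive one. After an intersection step, $\cand \subseteq \sigma_t$ has at most two nodes and one of them is at the center. A subsequent cheap request intersects $\cand$ again by \cref{lem:tracking}. If the new intersection still contains the central node, there is no migration. The remaining case is that the new intersection keeps only the non-central candidate, which I must rule out or charge.

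\textbf{Hardest step: the migration-free claim for consecutive cheap requests.} I expect this to be the main obstacle. I would argue it with \cref{lem:phaseStartsWithX} and the structure of \OPTstar. Consecutive cheap requests with no expensive request in between lie in one phase of \OPTstar, because a new phase starts with a migration, and that migration makes the request expensive by definition. So \OPTstar's pivot $x$ lies in every one of these requests. The tracking argument behind \cref{lem:tracking} should give $x \in \cand$ throughout. If $\cand=\{x,y\}$ and the next request contains $x$ but not $y$, the fixed tie-breaking may have put $y$ at the center, and then one extra migration occurs. To handle this, I would strengthen the accounting as follows: once $|\cand|=1$, the candidate is exactly the pivot and there are no further migrations. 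With two candidates, a request of the form $\{x,y\}$ costs \ALG\ only $1$, so it cannot trigger a migration. Hence in a run of cheap requests, \ALG\ migrates at most twice, and the second migration occurs only when the run has length at least $2$. If this turns out to be too weak, I would instead bound the run by amortizing: \ALG\ pays at most $|\Gamma^+| + 2$ on a run of $k$ cheap requests, and at most $k+1$ when $k=1$.

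\textbf{Per-block ratio.} Take a block $\mathcal{E}^a\Gamma^b$ with $a,b\ge 1$. \OPTstar\ pays $2a+b$. \ALG\ pays at most $2a + b + 1$ in the clean case, i.e.\ with only one migration per block. Then the ratio is
\[
\frac{2a+b+1}{2a+b}\le \frac{4}{3}\le \frac32 ,
\]
and the bound becomes $(2a+b+2)/(2a+b) \le 1.5$ once $2a+b\ge 4$. This covers the two-migration case, since that case needs $b\ge 2$ and hence $2a+b \ge 4$. The leading $\Gamma^*$ segment costs \ALG\ exactly what it costs \OPTstar: \ALG\ starts with $\cand=\{x_{\textrm{init}}\}$ at the center, so it makes no migrations there. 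The trailing $\mathcal{E}^*$ segment costs both algorithms $2$ per request. Summing over the segments then gives $\ALG(\sigma)\le 1.5\,\OPT(\sigma)$.
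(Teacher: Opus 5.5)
Your proposal is correct and is essentially the paper's proof: it uses the block decomposition, \cref{lem:tracking} to fix \OPTstar's cost on a block at $2\epsilon+\gamma$, and a bound of $\gamma+\min\{2,\gamma\}$ on \ALG's cost for the cheap run, giving a ratio of at most $1+\min\{2,\gamma\}/(2\epsilon+\gamma)\le 3/2$. Drop your opening claim that migrations occur only on the first cheap request, which is false as you notice yourself; the paper gets your corrected count of at most two migrations more directly by observing that each migration on a cheap request strictly shrinks $\cand$, which can happen at most twice in a run of intersections.
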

\begin{proof}
  Fix any sequence of requests $\sigma$. 
  To analyze the competitive ratio, we partition the block decomposition of $\sigma$ into three parts: the \emph{prefix} $\Gamma^{*}$, the sequence of \emph{blocks} $\mathcal{E}^{+} \ \Gamma^{+}$ and the \emph{suffix} $\mathcal{E}^{*}$. 
  Observe that for the prefix and the suffix, \ALG\ never pays more than \OPTstar, we therefore only need to focus on the blocks $\mathcal{E}^{+} \ \Gamma^{+}$ to analyze the competitive ratio. 

  Let $B$ be a block, we denote the number of expensive and cheap requests in $B$ by $\epsilon$ and $\gamma$, respectively.
  Let $\OPTstar(B)$ be the cost of \OPTstar\ on $B$.
  By definition, the following equation holds:
  \begin{align*}
    \OPTstar(B) = 2 \epsilon + \gamma
  \end{align*}

  As stated in \cref{lem:tracking}, \ALG's intersection behavior triggers when the $\gamma$ cheap requests of the block $B$ are presented.
  Since \ALG\ serves $\gamma$ consecutive cheap requests, $\cand$ experiences $\gamma$ consecutive intersections.
  If \ALG\ pays an expensive cost on such cheap requests, the size of $\cand$ must strictly decrease, which can happen at most two times.
  Therefore, \ALG\ pays no more than $\gamma + \min \{2, \gamma\}$ for the $\gamma$ cheap requests.
  The following upper bound for the competitive ratio holds:

  \begin{align*}
    \frac{\ALG(B)}{\OPTstar(B)} &\le \frac{2 \epsilon + \gamma + \min \{2, \gamma\}}{2 \epsilon + \gamma}\\
                                &\le 1 + \frac{\min \{2, \gamma\}}{2 \epsilon + \min \{2, \gamma\}}\\
                                &\le 1.5 \qedhere
  \end{align*}
\end{proof}

\subsection{A lower bound of 1.5 for deterministic algorithms}

\begin{theorem}\label{lem:any_r_lb-det}
  No deterministic online algorithm can achieve a better competitive ratio than $1.5$ for the online dynamic graph embedding problem in the star topology.
\end{theorem}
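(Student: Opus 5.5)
We prove the bound with an adaptive adversary on a star with only three physical hosts and three guest \VMs\ $a,b,c$. In this setting every request is a pair of \VMs, and its serving cost is $1$ if one of its endpoints is at the center and $2$ otherwise. The key observation is that for every configuration of the online algorithm \ALG\ there is a pair not containing \ALG's central \VM: if $a$ is at the center, the pair $\{b,c\}$ costs \ALG\ $2$ (or $1$ migration plus $1$ serving). So the adversary always requests the pair of the two \VMs\ that are \emph{not} at \ALG's center. Each such request costs \ALG\ at least $2$, so after $m$ requests \ALG\ has paid at least $2m$.

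The remaining work is to upper bound \OPT\ on this sequence by $\tfrac{4}{3}m + O(1)$, which gives the ratio $2m/(\tfrac43 m+O(1)) \to 1.5$. For this we compare against three offline strategies $S_a,S_b,S_c$, where $S_x$ keeps $x$ at the center for the whole sequence after an initial migration of cost at most $1$. Each requested pair omits exactly one \VM, so each request $\{y,z\}$ costs $1$ under the two strategies centered at $y$ or $z$, and $2$ under the third. Summed over the three strategies, every request contributes $1+1+2=4$. Hence $\sum_x S_x(\sigma) \le 4m+3$, and therefore $\OPT(\sigma) \le \min_x S_x(\sigma) \le \tfrac{4}{3}m+1$.

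Combining the two bounds gives $\ALG(\sigma)/\OPT(\sigma) \ge 2m/(\tfrac43 m+1)$, which tends to $3/2$ as $m\to\infty$. This rules out any ratio $c<1.5$, even allowing an additive constant, since $\OPT$ grows unboundedly with $m$. If the star has $n>3$ hosts, the extra \VMs\ are simply never requested and never help anyone, so the bound carries over.

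The main point to check is that the adversary's sequence is well defined against \ALG's migrations. Because the adversary is adaptive and \ALG\ is deterministic, it can simulate \ALG. Since a migration may also happen after the request is seen, we must charge \ALG\ correctly: if \ALG\ migrates $y$ or $z$ to the center before serving, it pays $1+1=2$, and otherwise it pays $2$. The cost is therefore always at least $2$. No other step is delicate; the averaging argument over the three static strategies is the core.
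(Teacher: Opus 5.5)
Your proposal is correct and follows essentially the same route as the paper. Both use a three-node adversary that always requests the two nodes not at \ALG's center, which forces cost at least $2$ per request, and both compare against the best static offline strategy. Your averaging over $S_a,S_b,S_c$ is exactly the paper's pigeonhole step (the most-requested node appears in at least $\tfrac{2}{3}m$ requests); you simply account for the initial migration cost explicitly rather than dismissing it as negligible.
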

\begin{proof}
  Let \ALG\ denote any deterministic algorithm.
  Fix three \VMs\ $x_{1}$, $x_{2}$ and $x_{3}$.
  We will construct the sequence $\sigma$ that always requests two \VMs\ from $x_{1}, x_{2}, x_{3}$ which are not at the center in \ALG's configuration.
  Hence, \ALG\ pays at least $2$ for each request of $\sigma$.

  Given $i \in \{1,2,3\}$, we call $|\sigma(x_{i})|$ the number of requests in $\sigma$ where \VM\ $x_{i}$ is requested.
  Since each request of $\sigma$ contains exactly two different \VMs, we have $\sum_{i=1}^{3} |\sigma(x_{i})| = 2 \cdot |\sigma|$.
  Assuming that $x_{1}$ was the most requested \VM\ in $\sigma$, we have $|\sigma (x_{1})| \geq \frac{2}{3} \cdot |\sigma|$.

  Let \OFF\ be the offline algorithm that migrates the overall most requested \VM\ of $\sigma$ to the center before the first request, and does not change its configuration afterward.
  As shown by the above inequality, \OFF\ serves more than $\frac{2}{3} \cdot |\sigma|$ requests at a cost of  $1$, and it serves the remaining requests at a cost of $2$.
  Hence \OFF\ pays at most $1 \cdot |\sigma(x_{1})| + 2 \cdot (|\sigma| - |\sigma(x_{1})|) \leq \frac{4}{3} \cdot |\sigma|$ throughout the whole input sequence.
  Note that we are not accounting for the initial migration cost to migrate $x_{1}$ to the center since it becomes negligible as the number of requests grows.
  For any optimal offline algorithm \OPT, we finally have
  \[
    \frac{\ALG (\sigma)}{\OPT (\sigma)} \geq \frac{\ALG (\sigma)}{\OFF (\sigma)}  \geq \frac{2 \cdot |\sigma|}{ 4/3 \cdot |\sigma|} = 1.5
  \]
  from which the claim follows.
\end{proof}

\section{Randomized algorithms}\label{sec:rand-d=1}

The main contribution of this paper is an optimal randomized algorithm for the dynamic graph embedding problem in star topologies.
We begin this section by introducing a lower bound then we design and analyze a randomized online algorithm that matches this bound.

\subsection{A lower bound of 11/9 for randomized algorithms}\label{sec:rand_lb_small_r}

In this section, we present a lower bound of $11/9$ for the randomized \VM\ migration problem.
We first design a probabilistic input sequence $\sigma$ and analyze the expected costs that an offline algorithm can achieve on this sequence.
Using Yao's principle, we obtain a lower bound on the expected cost of any randomized online algorithm.

We build the probabilistic input sequence $\sigma$ based on successive pairs of requests.
Every pair of $\sigma$ will have an associated \VM\ called the \emph{pivot} that will be important to define further algorithmic behavior --- we call it this way as we will compare any online algorithm to an offline algorithm \OFF\ that keeps the pivot at center.
Then, we construct pairs of requests one after another according to the following scheme.
Let $p \in [0, 1]$ be a probability parameter and let $i \ge 3$ be an odd integer.
The pair of requests $(\sigma_{i}, \sigma_{i+1})$ is built according to one of the following patterns:

\begin{figure*}[h!]
  \centering
  \includegraphics[width=0.7\linewidth]{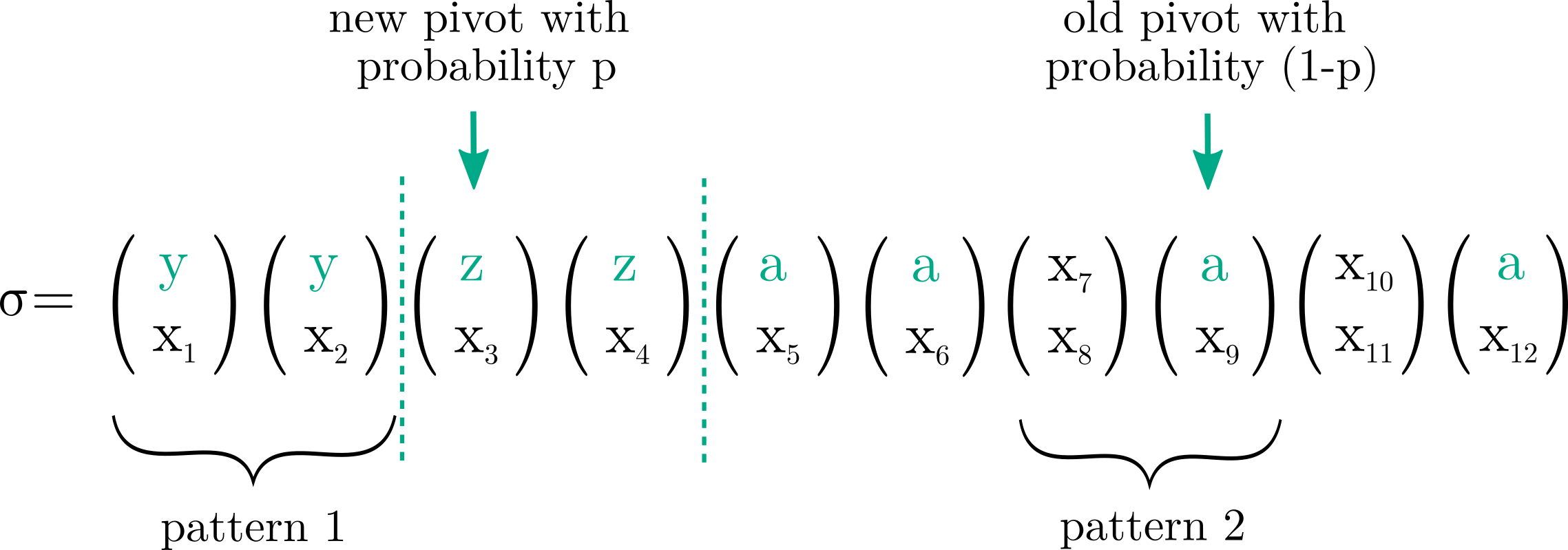}
  \caption{
    Presentation of a possible request sequence, with three patterns of type 1 followed by two patterns of type 2.
    $y, z,$ and $a$ are the successive pivots of the offline algorithm \OFF\ for this input sequence.
    The dotted lines visualize the changes of the pivot.
    $y,z,a$ and $x_i, i\in{1,\dots,12}$ are chosen uniformly at random such that they do not overlap with the nodes of the previous patterns.
    Let an offline algorithm \OFF\ start with $x_1$ at the center.
    Then \OFF\ pays $3$ for each pair of requests of this input sequence, either because it migrates a node (pattern 1) or because of a request that does not contain the pivot (pattern 2).
  }
  \label{fig:lower_bound_pair_definition}
\end{figure*}

\textbf{Pattern 1. }
Happens with probability $p$.
Pattern $1$ is defined as $(\sigma_i,\sigma_{i+1}) = \bigl((a, x_1), (a, x_2)\bigr)$, where $a \neq x_{1} \neq x_{2}$ are chosen uniformly at random among \VMs\ that did not appear in the previous pair $(\sigma_{i-2}, \sigma_{i-1})$.
The \VM\ $a$ is the pivot of this pair.

\textbf{Pattern 2.}
Happens with probability $1-p$.
Pattern $2$ is defined as $(\sigma_i,\sigma_{i+1}) = \bigl((x_1, x_2), (a, x_3)\bigr)$, where $x_{1} \neq x_{2} \neq x_{3}$ are chosen uniformly at random among \VMs\ that did not appear in the previous pair $(\sigma_{i-2}, \sigma_{i-1})$.
Here, the \VM\ $a$ is the pivot of the previous and the current pair.

The request sequence $\sigma$ starts with pattern $1$.
Afterward, pattern $1$ follows with probability $p$ and pattern $2$ with probability $1-p$.
This process is repeated indefinitely, hence $\sigma$ has infinite length.
\cref{fig:lower_bound_pair_definition} features an example for the first five pairs of $\sigma$.

In the following, we present the lower bound.
This bound holds against the oblivious adversary, the standard adversary studied in randomized online algorithms~\cite{Borodin1998}.

\begin{theorem}\label{th:lb-rand}
  No randomized online algorithm can achieve a better competitive ratio than $11/9$ for the online dynamic graph embedding problem in the star topology with $n \ge 10$ physical hosts.
\end{theorem}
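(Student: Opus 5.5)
The plan is to apply Yao's principle to the random input $\sigma$ defined above. I will show two things. First, the offline algorithm \OFF, which always keeps the current pivot at the center, pays exactly $3$ per pair. Second, with $p = 2/3$, every deterministic online algorithm pays at least $11/3$ in expectation per pair, up to an additive constant for the start of the sequence. Taking finite prefixes of $m$ pairs and letting $m\to\infty$ then gives an expected ratio that tends to at least $(11/3)/3 = 11/9$.

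\textbf{Cost of \OFF.} In pattern $1$, \OFF\ migrates the new pivot $a$ to the center at cost $1$ and then serves $(a,x_1)$ and $(a,x_2)$ at cost $1$ each. In pattern $2$, \OFF\ keeps $a$ at the center, pays $2$ for $(x_1,x_2)$ and $1$ for $(a,x_3)$. So every pair costs \OFF\ exactly $3$, apart from an $O(1)$ initial migration.

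\textbf{Normalizing the online algorithm.} I will show that at the beginning of each pair we may assume the online algorithm has the previous pivot at its center. After the second request of a pair, the pivot is known to the algorithm: it is the node shared by both requests in pattern $1$, and the unique non-fresh node in pattern $2$. Serving the second request with the pivot moved to the center costs at most $1+1=2$, and any server pays at least $2$ when its center is not an endpoint of the request. So, by an exchange argument, ending each pair with the pivot at the center never hurts the algorithm in the future.

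\textbf{Per-pair cost of any deterministic algorithm.} The first request of a pair consists of two fresh nodes in both patterns. It is therefore indistinguishable between pattern $1$, where one of the two nodes, uniformly random, is the new pivot, and pattern $2$, where neither is. The algorithm pays $2$ on this request whatever it does. It either keeps the old pivot (strategy A) or migrates one of the two requested nodes and pays $1+1$ (strategy B); any other move is no better than these. On the second request:
\begin{itemize}
\item Under A, it pays $1$ in pattern $2$ and $2$ in pattern $1$, so the pair costs $3+p$ in expectation.
\item Under B, it pays $2$ in pattern $2$. In pattern $1$ it pays $1$ if it guessed the pivot (probability $1/2$) and $2$ otherwise, so the pair costs $4-p/2$ in expectation.
\end{itemize}
Any randomized choice gives a convex combination of the two. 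Choosing $p=2/3$ equalizes them at $11/3$, against \OFF's $3$.

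\textbf{Main obstacle.} The delicate step is making the normalization and the ``indistinguishability'' argument rigorous. This needs the uniform, fresh choice of nodes, so that the algorithm gains no information from the node identities. This is where $n\ge 10$ is used: a pair together with the previous pair involves at most $6$ nodes, and enough unused nodes must remain for the choices to be uniformly random and unpredictable. I must also ensure that a deterministic algorithm cannot profit from leaving a non-pivot at the center across several pairs, for example anticipating a run of pattern $2$. I would handle this with a potential-style accounting: charge each pair separately, and show that any state other than ``pivot at center'' costs at least as much on the next pair.
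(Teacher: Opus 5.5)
\textbf{Gap: the normalization to ``pivot at center'' is false.} Your analysis of a pair that begins with the previous pivot at the center (strategies A and B, equalized at $11/3$ by $p=2/3$) is exactly the paper's case~1. The gap is the normalization, and it is not merely unproved but false.

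Under strategy A in pattern~1, after paying $2$ for $(a',x_1)$ the server can serve $(a',x_2)$ at cost $2$ \emph{without} migrating. This costs the same as your normalized server, but it leaves the old pivot $a$ at the center. Since $a$ does not occur in the current pair, it is one of the $n-3$ nodes from which the next pair is drawn.
\begin{itemize}
\item With probability $2/(n-3)$, $a$ lies in the next first request, which is then served at cost $1$. That pair then costs $8/3$ in expectation: the center is the new pivot, a pattern-1 non-pivot, or a pattern-2 node, each with conditional probability $1/3$, at pair costs $2,3,3$.
\item Otherwise the server plays strategy B for $11/3$.
\end{itemize}
So the next pair costs $11/3-2/(n-3)<11/3$. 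Hence the claim in your last paragraph, that every state other than ``pivot at center'' costs at least as much on the next pair, is false.

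Your exchange argument misses this because it compares only the cost of the current request and ignores the value of the end state. It also overlooks a server holding the fresh endpoint $x_2$ or $x_3$, which pays $1$ where your modification pays $2$.

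\textbf{The paper's repair does not close this gap either.} The paper treats this start state as its case~3. It charges the pair together with the preceding pair, which must have cost at least $4$. For $n\ge 10$ the discount $2/(n-4)$ is at most the surplus $1/3$. That inequality, not the freshness of the random choices, is where $n\ge 10$ comes from.

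But the charge double counts: the cost $4$ of the pattern-1 branch is already used up in averaging $11/3$ against the cost $3$ of the pattern-2 branch. Consider the deterministic policy above: strategy A from the pivot state, staying on the old pivot in pattern~1, and the rule just described from the resulting state. It starts $2/5$ of the pairs with a fresh node at the center and averages $11/3-\frac{4}{5(n-3)}$ per pair. Meanwhile \OFF\ pays $3$ per pair and is asymptotically optimal: a pair costs $2$ only if its pivot is central when it starts, which forces the preceding pair to cost at least $4$. So with this input distribution no accounting can give $11/3$ per pair at a fixed $n$.

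What your per-pair scheme can correctly prove is that from every start state the next pair costs at least $11/3-2/(n-4)$ in expectation, with the fresh-center state being the worst. This gives a ratio of $11/9-O(1/n)$, hence $11/9$ only as $n\to\infty$. A bound of exactly $11/9$ for each fixed $n\ge 10$ would need a modified construction.
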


\begin{proof}
  To prove this theorem, we will assume that the input sequence is chosen according to the description above.
  In particular, pattern 1 is chosen with probability $p$ and pattern 2 is chosen with probability $1-p$.
  We will determine the value of $p$ later in this proof. 

  Let \OFF\ be the following offline algorithm: just before \OFF\ serves the first request of a~pair, it migrates the pivot of that pair to the center --- it does nothing if the pivot is already at the center.
  Observe that \OFF\ pays exactly $3$ for each pair in~$\sigma$.
  
  We use \OFF\ as a reference to compute a lower bound on the expected cost of any online algorithm.
  Let \ALG\ be a deterministic online algorithm.
  Let $i \ge 3$ be an odd integer, we assume that $(\sigma_{i}, \sigma_{i+1})$ is a pair of requests that \ALG\ is about to serve.
  Let $x$ be \ALG's central \VM\ after it finished serving the previous pair.
  We now distinguish between three cases depending on \ALG's central \VM: (1) $x$ is the pivot of the previous pair, or (2) $x$ was requested in the previous pair but is not the pivot, or (3) $x$ was not requested in the previous pair.
  \begin{enumerate}
  \item \ALG's central \VM\ $x$ was the pivot of the previous pair.
    Whatever the pattern of the current pair is, \ALG\ does not have a \VM\ requested in $\sigma_{i}$ at the center and cannot know the type of the pattern yet.
    If \ALG\ chooses to migrate one of the requested \VMs\ of $\sigma_{i}$ to the center, it pays $2$ for $\sigma_{i}$ and $1.5 \cdot p + 2 \cdot (1-p)$ in expectation for $\sigma_{i+1}$.
    Otherwise, if \ALG\ chooses to not migrate a \VM\ before it serves $\sigma_{i}$, then it pays $2$ for $\sigma_{i}$ and $2 \cdot p + 1 \cdot (1-p)$ in expectation for $\sigma_{i+1}$.
    We now fix $p= 2/3$ so that the expected cost of \ALG\ on this pair is the same independent of its choice.
    The expected cost for this pair is $11/3$.

  \item \ALG's central \VM\ $x$ was requested in the previous pair but is not the pivot.
    If \ALG\ chooses to migrate a \VM\ requested in $\sigma_{i}$ before serving it, it pays $2 + 1.5 \cdot p + 2 \cdot (1-p) = 11/3$ in expectation since we fixed $p=2/3$.
    Otherwise, if \ALG\ chooses to not migrate a \VM\ before it serves $\sigma_{i}$ then \ALG\ pays $4$ for the pair.
    The expected cost of \ALG\ on this pair is at least $11/3$ independent of its choice.
  \item \ALG's central \VM\ $x$ was not requested in the previous pair.
    Then \ALG\ necessarily paid more than $4$ for the previous pair.
    Similarly to the previous cases, \ALG\ pays $11/3$ for the pair in expectation if it chooses to migrate a \VM\ that is requested in $\sigma_{i}$ before serving it.
    Otherwise, if \ALG\ chooses to not migrate a \VM\ before it serves $\sigma_{i}$, we compute the probabilities for it to pay $2$, $3$ and $4$ for this pair.
    \ALG\ pays $2$ iff the pair is of pattern $1$ and \ALG\ started the pair with the pivot at the center.
    This happens with probability $p \cdot \frac{1}{n-4}$.
    \ALG\ pays $4$ iff $x \notin \sigma_{i} \cup \sigma_{i+1}$, which happens with probability at least $\frac{n-7}{n-4}$.
    \ALG\ pays $3$ for the remaining cases.
    One can check that the expected cost on the current and previous pairs is not smaller than $2 \cdot 11/3$ for $n \ge 10$, hence the expected cost per pair is at least $11/3$.
  \end{enumerate}
  
  For any pair, except the first one, we proved that \ALG\ pays an expected cost no smaller than $11/3$ in all cases.
  The role of the first pair becomes negligible as the number of requests grows.
  Hence, any deterministic algorithm shows an expected competitive ratio no smaller than $11/9$.
  According to Yao's principle, we deduce that no randomized algorithm has a~competitive ratio smaller than $11/9$.
\end{proof}

\subsection{An optimal randomized online algorithm}

In this section, we extend the algorithm design outline for the deterministic dynamic graph embedding to the randomized setting. We thereby match the presented lower bound from the previous section. 
As \cref{alg:det_d_unit_leadertracking}, the randomized algorithm tracks the set of candidate \VMs\ that a particular offline algorithm \OPTstar\ (defined above in \cref{def:OPTstar}) may keep at the center. In the intersection behavior, the randomized algorithm breaks ties uniformly at random. In the union behavior, with some probability, the randomized algorithm has an option not to migrate any \VMs.
\cref{alg:rand_leadertracking} presents this idea as pseudocode.

\begin{algorithm}[h!]

  \caption{Randomized PivotTracking}\label{alg:rand_leadertracking}

  \SetKwInOut{Input}{Input}
  \Input{let  $x_{\textrm{init}}$ be the initial \VM\ at the center, $\cand \gets \{x_{\textrm{init}}\}$}

  \When{request $\sigma_t =\left\{x_{1}, x_{2}\right\}$ is issued}
  {
    \tcp{intersection behavior}
    \uIf{$|\cand \cap \sigma_{t}| > 0$}
    { 
      $\cand \gets \cand \cap \sigma_{t}$\\
      \uIf{no \VM\ from $\cand$ is at the center}{
        migrate a \VM\ from $\cand$ to the center, breaking ties uniformly at random
      }
    }
    \tcp{union behavior}
    \uElse{
      $\cand \gets \cand \cup \sigma_{t}$\\
      do one of the following actions with uniform probability: do nothing, migrate $x_{1}$ to the center, migrate $x_{2}$ to the center.
    }
    serve request $\sigma_t$
  }
\end{algorithm}

In the following theorem, we claim that \cref{alg:rand_leadertracking} is $11/9$-competitive. By combining this result with the lower bound from \cref{th:lb-rand} we show that this algorithm is optimal.

\begin{restatable}{theorem}{randomizedCompetitiveRatio}
  \label{th:rand-ub}
  \cref{alg:rand_leadertracking} is $11/9 \approx 1.222$ competitive against the oblivious adversary.
\end{restatable}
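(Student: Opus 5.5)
The plan is to reuse the block framework from the proof of \cref{th:det-ub}. The first point to establish is that \cref{lem:tracking} still applies to \cref{alg:rand_leadertracking}. The evolution of $\cand$ in \cref{alg:rand_leadertracking} is identical to that in \cref{alg:det_d_unit_leadertracking}: it depends only on the request sequence, and the random choices affect only which \VM\ sits at the center. The proof of \cref{lem:tracking} uses only the update rule for $\cand$, so union steps still coincide with expensive requests of \OPTstar\ and intersection steps with cheap ones. Fixing $\sigma$ (oblivious adversary), we take the block decomposition $\Gamma^{*}(\mathcal{E}^{+}\Gamma^{+})^{*}\mathcal{E}^{*}$. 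By linearity of expectation it then suffices to show $\mathbb{E}[\ALG(B)]\le \frac{11}{9}(2\epsilon+\gamma)$ for each block $B$, and that the prefix and suffix cost at most what \OPTstar\ pays.

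\textbf{Step 1 (invariant and expensive requests).} We show by induction that the central \VM\ of \ALG\ always lies in $\cand$. After an intersection step we migrate into $\cand$; after a union step the center is either the old center (in the old $\cand$) or a node of $\sigma_t\subseteq\cand$. Hence on a union step $\sigma_t\cap\cand=\emptyset$ implies that the center is not in $\sigma_t$. Each of the three uniform options then costs exactly $2$, matching \OPTstar. Consequently all excess cost comes from cheap requests. A cheap request costs $1$ if the center lies in the new $\cand$ and $2$ otherwise. As in the deterministic proof, excess is possible only when $|\cand|$ strictly shrinks, and this happens at most twice in a run of cheap requests, so the prefix costs exactly what \OPTstar\ pays.

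\textbf{Step 2 (center distribution after the union run).} Let $y_1,y_2$ be the nodes of the last expensive request of the block. After it, each of $y_1,y_2$ is the center with probability exactly $1/3$, by the uniform three-way choice. Any older candidate $w$ that was requested $k\ge1$ union steps earlier, or the initial pivot, is the center with probability at least $(1/3)^{k+1}$, and in particular at least $1/9$ if $k=1$. Moreover, the old center is retained with probability $1/3$.

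\textbf{Step 3 (case analysis on the first cheap request $\sigma$).} We split according to $|\sigma\cap\cand|$.

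(a) If $|\sigma\cap\cand|=1$, then $\cand$ becomes a singleton and all later cheap requests in the block cost $1$. The excess is $1-\Pr[\text{center}=w]$, where $w$ is the common node. If $w\in\{y_1,y_2\}$ this is $2/3\le\frac{2}{9}(2\epsilon+\gamma)$, since $2\epsilon+\gamma\ge3$. Otherwise $\epsilon\ge2$, and then the excess is at most $1\le\frac{2}{9}\cdot5$.

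(b) If $|\sigma\cap\cand|=2$, then $\sigma$ consists of two candidates. One shows that with probability at least $2/3$ the center is one of them, so the first excess is at most $1/3$. After a random migration, the conditional distribution on the two remaining candidates is such that each has probability at least $1/2$, which bounds the second reduction's excess by $1/2$. The total excess is then at most $5/6\le\frac{2}{9}(2+2)$ whenever $\gamma\ge2$; if $\gamma=1$ the excess is only $1/3$.

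\textbf{Main obstacle.} The delicate part is case (b) when $\epsilon\ge2$ and $\sigma$ pairs a node of the last request with an older candidate, or pairs two older candidates. In that situation the center is no longer guaranteed to be in $\sigma$ with probability $2/3$. The approach I would try first is to track the exact center distribution over $\cand$ after the union run as a function of $\epsilon$, using Step 2. The aim is to show that the total excess, at most $1$ for the first reduction plus $1/2$ for the second, stays below $\frac{2}{9}(2\epsilon+2)$ once $\epsilon\ge2$, because the budget grows with $\epsilon$. If that crude bound fails for small $\epsilon$, for instance the pair $(\epsilon,\gamma)=(2,2)$ with $3/2>4/3$, I would refine it by exact computation in that finitely checkable regime, using the $1/3$ retention probabilities to show that the center lies in $\sigma$ with enough probability.
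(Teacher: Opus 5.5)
Your Step 1 is fine: the evolution of $\cand$ does not depend on the random choices, so \cref{lem:tracking} carries over, the center always lies in $\cand$, union steps cost exactly $2$, and excess arises only at strict reductions. The gap is the reduction to a per-block inequality $\mathbb{E}[\ALG(B)]\le\frac{11}{9}(2\epsilon+\gamma)$. That inequality is false for some blocks, so no refinement of your case analysis can establish it. The faulty inference is ``otherwise $\epsilon\ge 2$'' in case (a), and the same assumption reappears in your ``main obstacle''. A candidate $w\notin\{y_1,y_2\}$ need not have entered $\cand$ during the current block. It can be inherited from the previous block, and if that block ended with $|\cand|=2$, the center at the block boundary is split between two nodes.

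Concretely, start with center $x_0$ and request $\{a_1,a_2\},\{b_1,b_2\},\{a_1,b_1\},\{c_1,c_2\},\{a_1,z\}$. After the third request, $\cand=\{a_1,b_1\}$ and $\Pr[\text{center}=a_1]=\frac19+\frac{5}{18}=\frac{7}{18}$. After $\{c_1,c_2\}$ this drops to $\frac{7}{54}$. So on the block $\{c_1,c_2\},\{a_1,z\}$, where $\epsilon=\gamma=1$ and \OPTstar\ pays $3$, the expected excess is $\frac{47}{54}>\frac23$, a block ratio of about $1.29$. Inserting a second $\{a_1,b_1\}$ before $\{a_1,z\}$ gives a block with $\epsilon=1$, $\gamma=2$ and excess $\frac{65}{54}>\frac89$.

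The missing ingredient is amortization across blocks. This is why the paper classifies blocks by whether $|\cand|=1$ or $|\cand|=2$ at their end, and writes $\sigma=\Gamma^{*}(A^{*}S)^{*}A^{*}\mathcal{E}^{*}$ (runs of two-candidate blocks, each closed by a one-candidate block). The pieces fit together as follows:
\begin{itemize}
\item A block entered from a one-candidate state, or from the prefix, has every candidate at the center with probability at least $3^{-\epsilon}$. There a per-block bound of the kind you attempt does hold.
\item A block ending with two candidates incurs only one strict reduction, so it leaves slack. The slack is at least $1/3$ if the block is entered from a one-candidate state.
\item The one-candidate block closing a run, which can exceed $11/9$ on its own as above, has to be paid for with that slack. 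The paper does this by analyzing the first and last blocks of each run together; the two blocks of the example have combined ratio $\frac{509}{432}<\frac{11}{9}$.
\end{itemize}
If you set this up yourself, note that the split inherited from a two-candidate block is not uniform in general ($7/18$ versus $11/18$ above). What you can guarantee is that it lies in $[1/3,2/3]$, since just before the reduction that creates it, every candidate is at the center with probability at most $1/3$.
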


\begin{proof}
  Fix any sequence of requests $\sigma$, and let \OPTstar\ be an optimal algorithm that minimizes the reversed lexicographic order on $\sigma$ as described in \cref{def:OPTstar}.
  Throughout this proof, we refer to \cref{alg:rand_leadertracking} as \ALG.
  We perform a block decomposition of $\sigma$ as described in \cref{def:block_decomposition}:
  \begin{align*}
    \sigma = \Gamma^{*} \ \left(\mathcal{E}^{+} \ \Gamma^{+}\right)^{*} \mathcal{E}^{*}
  \end{align*}

  \noindent As in the proof for \cref{th:det-ub}, we focus on the blocks of the form $\mathcal{E}^{+} \ \Gamma^{+}$ in the following analysis. 

  This time, we however distinguish between two kinds of blocks.
  We say that a block is \textit{simple} if $\cand$ contains only one \VM\ when \ALG\ has served its last request.
  Conversely, a~block is called \textit{ambiguous} if $\cand$ contains two \VMs\ when the last request is served.
  Using these definitions, we further decompose the blocks of $\sigma$ into the following regular expression, where $A$ and $S$ stand for a simple and an ambiguous block respectively:
  \begin{align*}
    \sigma = \Gamma^{*} \ (A^{*} \ S)^{*} \ A^{*} \ \mathcal{E}^{*}
  \end{align*}

  Let $B_{1}, B_{2} \ldots B_{q}$ be the blocks of a pattern $A^{*} \ S$ in the above regular expression.
  Let $\epsilon_{i}$ and $\gamma_{i}$ be the numbers of expensive and cheap requests in $B_{i}$, $\forall i \in [1, q]$.
  We analyze this pattern in two steps: either $q=1$ or $q > 1$.
  We will then analyze the remaining blocks of the form $A^{*}$ to complete the proof.

  \textbf{Case $\bm{q = 1}$.}
  Let $x$ be \OPTstar's central \VM\ on the cheap requests of $B_{1}$.
  Compared to \OPTstar, \ALG\ pays an extra cost for the cheap requests unless \ALG\ already migrated $x$ to the center when the first cheap request is issued.

  In the following paragraph, we prove that, with a probability of at least $1/3^{\epsilon_{i}}$, \ALG\ has $x$ at the center when the first cheap request is issued.
  By \cref{lem:phaseStartsWithX}, $x$ was requested once before the cheap requests of $B_{1}$.
  If $x$ is requested among the expensive requests of $B_{1}$ then \ALG\ migrates $x$ to the center with probability $1/3$ since \ALG\ applies its union behavior on the expensive requests (\cref{lem:tracking}).
  If $x$ was migrated to the center, $x$ stays at the center until the first cheap request is issued with a probability of at least $1/3^{\epsilon_{1}-1}$.

  On the other hand, if $x$ was requested before $B_{1}$, we make use of the initial assumption that $q = 1$.
  It implies that the candidate set contains one \VM\ before $B_{1}$.
  Using \cref{lem:tracking}, \OPTstar\ and \ALG\ have the same central \VM\ before $B_{1}$ and this central \VM\ is $x$ ---~otherwise \OPTstar\ would start a new phase during $B_{1}$ and $x$ would have been requested in the expensive requests of $B_{1}$ using \cref{lem:phaseStartsWithX}.
  $x$ is then kept at the center until the first cheap request of $B_{1}$ with a probability of at least $1/3^{\epsilon_{1}}$.

  To prove the claimed upper bound, we make a distinction between two cases: either $\gamma_{1} = 1$, or $\gamma_{1} > 1$.
  We write below the parameterized costs $\ALG(B_1)$ for both cases, one can verify that the obtained competitive ratio is below $11/9$ in each case.
  \begin{itemize}
  \item If $\gamma_{1} = 1$, then \ALG\ pays an additional cost of at most $1$ (compared to \OPTstar).
    \ALG\ can avoid paying that additional cost if it already has $x$ at the center, which happens with a probability of at least $1/3^{\epsilon_{1}}$.
    Hence, \ALG\ pays in expectation at most:
    \begin{align*}
      \mathbb{E}[ \ALG(B_{1}) ] \leq 2 \epsilon_{1} + \gamma_{1} + \left(1 - 1/3^{\epsilon_{1}}\right)
    \end{align*}

  \item Otherwise, if $\gamma_{1} > 1$, then \ALG\ pays an additional cost of at most $2$.
    As \ALG\ treats the requests of the block, \ALG's candidate set experiences at most two strict reductions.
    We already covered the case where zero or one reductions occur as \ALG\ treats the requests of the block, we therefore assume that exactly two reductions occur.
    This implies that there exists another \VM\ $y$ that was also kept in the candidate set after the first cheap request (which coincides with the first reduction).
    Just as $x$, $y$ is at the center just before the first cheap request with a probability of at least $1/3^{\epsilon_{1}}$.
    Using the total probability law on \ALG's possible outcomes, it holds:
    \begin{align*}
      \mathbb{E}[\ALG(B_{1})] \leq 2 \epsilon_{1} + \gamma_{1} + 1/3^{\epsilon_{1}} + \frac{3}{2}\cdot \left(1 -  2/3^{\epsilon_{1}}\right) 
    \end{align*}
  \end{itemize}

  \textbf{Case $\bm{q > 1}$.}
  We are going to analyze the expected costs on three different types of blocks: the ambiguous block $B_{1}$, the last simple block $B_{q}$, and the intermediate ambiguous blocks $B_{2}, B_{3}, \dots B_{q-1}$, if they exist.
  We show that the desired competitive ratio holds on each intermediate block, and that it holds on the first and the last blocks combined.

  We first show that the claim holds on any block $B_{k}$ from $B_{2}, \ldots,  B_{q-1}$.
  Let $x$ and $y$ be the two \VMs\ in $\cand$ at the end of $B_{k}$. These \VMs\ are requested in each cheap request of $B_{k}$.
  If $x$ or $y$ is requested in the expensive requests of $B_{k}$ then it stays at the center until the first cheap request with a probability of at least $1/3^{\epsilon_{k}}$.
  Otherwise, $x$ or $y$ was in $\cand$ before $B_{k}$.
  In that case, either $x$ or $y$ were at the center before block $B_{k}$ with uniform probabilities since the previous block is ambiguous.
  As a result, $x$ or $y$ are at the center before the first cheap request with a probability of at least $0.5 \cdot 1/3^{\epsilon_{k}}$.
  Hence, we have
  \begin{align*}
    \ALG(B_{k}) \leq 2 \epsilon_{k} + \gamma_{k} + \left(1 - 2 \cdot \frac{1}{2 \cdot  3^{\epsilon_{k}} }\right)
  \end{align*}
  This expression is the same as in the case $q=1$ and the competitive ratio of $11/9$ holds for the same reason.

  In the following, we show that the desired upper bound holds on blocks $B_{1}$ and $B_{q}$, combined.
  Let us first compute their expected costs separately.

  The costs for $B_{1}$ are computed as follows: let $x$ and $y$ be the two \VMs\ in $\cand$ at the end of $B_{1}$.
  As stated in the case $q=1$, the probability that $x$ (resp. $y$) is at the center just before the first cheap request is at least $1/3^{\epsilon_{1}}$.
  Hence:
  \begin{align*}
    \mathbb{E}[\ALG(B_{1})] \leq 2 \epsilon_{1} + \gamma_{1} + (1 - 2 \cdot 1/3^{\epsilon_{1}})
  \end{align*}

  We next give an upper bound for the expected cost of \ALG\ on $B_{q}$.
  Let $x$ be the final \VM\ in $\cand$ for that block.
  The situation is almost the same as in case $q=1$, except for the probability of $x$ to be at the center before the first cheap request:
  \begin{align*}
    \textrm{if $\gamma_{q} = 1$, then }\\
    \mathbb{E}[\ALG(B_{q}) ] &\leq 2 \epsilon_{q} + \gamma_{q} + \left(1 - \frac{1}{2 \cdot  3^{\epsilon_{q}} }\right),\\
    \textrm{otherwise if $\gamma_{q} > 1$ }\\
    \mathbb{E}[\ALG(B_{q})] &\leq 2 \epsilon_{q} +
                              \gamma_{q} + \frac{1}{2
                              \cdot 3^{\epsilon_{q}}} + \left(1 - 0.5 \cdot \frac{1}{ 2 \cdot 3^{\epsilon_{1}} }\right) \cdot \frac{3}{2}
  \end{align*}
  It is easy to verify that the competitive ratio holds by analyzing the blocks $B_{1}$ and $B_{q}$ combined.

  We now analyze the last group of blocks of the form $A^{*}$.
  As previously shown, the desired competitive ratio holds on all blocks except for the last simple one, the claim therefore holds for that last pattern.

  We partitioned the input sequence into subsets of requests, and proved that the claimed competitive ratio always holds on each subset. Therefore, \ALG\ is $11/9$-competitive.
\end{proof}

\section{Conclusions}
\label{sec:conclusions}

In this paper, we obtained optimal online algorithms for the dynamic graph embedding problem in the star topology.
The optimal competitive ratios for the studied problem turned out to be very low: our randomized online algorithm incurs at most $22\%$ more cost than any algorithm that knows the future requests.

The main direction for future research is to investigate other network topologies beyond the star. We believe that the techniques obtained in this paper can help with this task, and that our algorithms will serve as a fundamental algorithmic building block.

\balance

\bibliographystyle{IEEEtran}

\bibliography{references}

\appendix

\section{Proof of \cref{lem:tracking}}
\label{apx:tracking}
\trackinglemma*

\begin{proof}
  We prove the claim using induction on the request sequence $\sigma$.
  For any request $\sigma_j$, we call $\cand_j$ the candidate set of \ALG\ after it served $\sigma_j$.

  We first prove that the base case for the induction holds, i.e., the claim holds for $\sigma_1$.
  Let $x_{\textrm{init}}$ be the initial central \VM\ for both \ALG\ and \OPTstar, $\cand_{0} = \{x_{\textrm{init}}\}$.
  \ALG\ applies the union behavior on $\sigma_1$ if the initial central \VM\ is not requested. In this case, \OPTstar\ pays $2$ since it has no \VM\ from $\sigma_1$ at the center.
  \ALG\ applies the intersection behavior if the initial central \VM\ is in~$\sigma_1$ and \OPTstar\ pays $1$ (otherwise it would not have a minimum reversed lexicographic order). This concludes the base case.

  For the induction step, let $\sigma_i$ be any request in phase~$P$ with pivot \VM\ $x$.
  By our induction hypothesis, the claim holds for the requests $\sigma_1, \sigma_2 \dots \sigma_{i-1}$.
  We divide the analysis into the following two steps. 
  We first prove that if \OPTstar\ pays $1$ for $\sigma_i$ then \ALG\ applies its intersection behavior. Afterward, we prove that if \OPTstar\ pays $2$ then \ALG\ applies its union behavior.

  \textbf{\OPTstar\ pays $\bm{1}$} for $\sigma_i$, i.e., $\sigma_i$ is a cheap request.
  It follows that $\sigma_i$ contains the pivot \VM\ $x$.
  We show that \ALG\ applies its intersection behavior on $\sigma_{i}$.

  We claim that $x$ was in $\cand$ at the beginning of phase $P$.
  If $P$ is the first phase of the input sequence, then $x$ was inserted in $\cand$ when \ALG\ was initialized.
  Otherwise, if $P$ is not the first phase, \cref{lem:phaseStartsWithX} states that the phase $P$ starts with an expensive request that contains $x$. According to the induction hypothesis, \ALG\ applied its union behavior for it, meaning that $x$ was inserted in $\cand$.

  We next claim that $x$ stays in $\cand$ from the beginning of the phase until $\sigma_i$.
  By the induction hypothesis, an expensive request only adds new \VMs\ to $\cand$ whereas a cheap request intersects $\cand$ with its requested \VMs, including $x$. $x$ therefore stays in $\cand$.

  The pivot is therefore contained in $\cand$ after $\sigma_{i-1}$ is served.
  Since $\sigma_i$ contains the pivot, \ALG\ applies its intersection behavior for $\sigma_i$.

  \textbf{\OPTstar\ pays $\bm{2}$} for $\sigma_i$, i.e., $\sigma_i$ is an expensive request.
  We want to prove that \ALG\ applies its union behavior.
  By means of contradiction, assume that \ALG\ applies its intersection behavior for request $\sigma_i$.
  We now construct a new optimal offline algorithm \OFF\ that will either pay a lower cost than \OPTstar\ or have a lower lexicographic order.

  The description of algorithm \OFF\ is as follows.
  Since \ALG\ applies its intersection behavior on $\sigma_i$, it means that $\cand_{i-1} \cap \sigma_i$ is not empty, let $y \in \cand_{i-1} \cap \sigma_i$. 
  Hence, $y \in \cand_{i-1}$. Let $\sigma_{k(y)}$ be the earliest request such that $\forall j \in [k(y), i-1]$, $y \in \cand_j$.
  Let \OFF\ act like \OPTstar\ up until $k(y)$.
  Before serving $\sigma_{k(y)}$, let \OFF\ migrate $y$ to the center and keep it there until it serves $\sigma_i$. There are now two possibilities.
  \OFF\ either keeps $y$ at the center until the end of $P$ if there is no more occurrences of $x$ in $P$, or it migrates $x$ to the center at the next occurrence of $x$. In both cases, \OFF\ acts just like \OPTstar\ on the remaining requests.

  Now that we described the new algorithm \OFF, we compare it to \OPTstar\ and derive the contradiction.
  We first claim that \OFF\ pays no more than \OPTstar\ minus $1$ from the beginning of the input sequence until it serves~$\sigma_i$.
  If~$k(y)=0$, then $y$ is at the center from the very beginning and \OFF\ does not need to migrate.
  Otherwise, if $k(y) \geq 1$ then, by the minimality of $k(y)$, it means that $y$ was inserted in $\cand$ right before $\sigma_{k(y)}$ is served.
  We deduce that \ALG\ applied its union behavior for request~$\sigma_{k(y)}$ and that~$\sigma_{k(y)}$ contains $y$. Note that this is the only possibility for \ALG\ to add a new \VM\ to $\cand$ once $\cand$ was initialized.
  Using the induction hypothesis, we conclude that \OPTstar\ paid $2$ for the request $\sigma_{k(y)}$: \OFF\ migrates $y$ to the center for the same cost as \OPTstar.

  After \OFF\ has $y$ at the center, we show that \OFF\ pays the same cost as \OPTstar\ minus $1$ until it serves $\sigma_i$.
  Since \ALG\ evicts any \VM\ from $\cand$ that is not requested when it treats one of the cheap requests, any cheap request from $\sigma_{k(y)}$ to $\sigma_{i-1}$ (or from $\sigma_{1}$ to $\sigma_{i-1}$ if $k(y) = 0$) must contain $y$.
  Otherwise, $y$ would have been evicted from $\cand$ which would contradict the definition of index $k(y)$.
  \OFF\ therefore pays at most the same cost as \OPTstar\ until it serves $\sigma_{i-1}$. 
  Finally, \OFF\ pays one less than \OPTstar\ on $\sigma_i$ since $y \in \sigma_i$ and $x \notin \sigma_i$.
  Thus the intermediate claim follows.

  We now compare \OFF\ and \OPTstar\ after they served $\sigma_i$ and exhibit a contradiction in all cases.
  We partition our analysis into two cases, depending on the two possible behaviors of \OFF\ after $\sigma_i$.
  Assume that $x$ is not requested in $P$ after $\sigma_i$.
  \OPTstar\ pays $2$ for each of those requests while \OFF\ pays no more than $2$.
  Since \OFF\ already paid a strictly lower cost than \OPTstar, \OFF\ globally pays a lower cost than \OPTstar. This is a contradiction.

  Otherwise, assume $x$ is requested after $\sigma_i$ in $P$, let $\sigma_{k(x)}$ be the first of those requests.
  \OFF\ keeps $y$ at the center and migrates $x$ to the center right before serving $\sigma_{k(x)}$.
  Thus \OFF\ pays the same cost as \OPTstar\ since it spares $1$ on $\sigma_{i}$ and pays $1$ more on $\sigma_{k(x)}$. 
  Compared to \OPTstar, the time that $x$ spends at the center was strictly reduced.
  We created an optimal algorithm with a lower reversed lexicographic order, which is a contradiction.

  We showed that \ALG\ does not apply its intersection behavior on an expensive request, as it leads to a contradiction in all cases. We conclude that \ALG\ always applies its union behavior for expensive requests.
  This claim holds for all $\sigma_{i}$ and therefore it holds for any request in  $\sigma$.
\end{proof}

\end{document}